\definecolor{lavender}{rgb}{0.9,0.9,0.98}
\tikzstyle{every picture}+=[remember picture]
\pgfplotsset{major grid style={densely dotted,black!70}}
\pgfplotsset{compat=1.9}
\newtheorem{theorem}{Theorem}
\begin{document}

\title{Detection of Number of Subcarriers of OFDM Systems using Eigen-Spectral Analysis}

\author{Vishnu Priya Chekuru, Ganapathiraju S S Ananya Varma, Arti Yardi, and Praful Mankar
\thanks{The authors are with the Signal Processing and Communication Research Center (SPCRC), International Institute of Information Technology Hyderabad, India. Email: \{vishnu.chekuru,ananya.varma\}@students.iiit.ac.in and \{arti.yardi,praful.mankar\}@iiit.ac.in.}
}

\maketitle

\begin{abstract}
Orthogonal Frequency-Division Multiplexing (OFDM) is widely used in modern wireless communication systems due to its robustness against time-dispersive channels. 
In this work, we consider a \textit{non-cooperative} scenario where the receiver does not have prior knowledge of the OFDM parameters such as the number of subcarriers and the aim is to estimate them using the received data. Such a setup has applications in cognitive radio networks. 
For this blind OFDM parameter estimation problem, we provide a novel method based on eigen-spectral analysis of the covariance matrix corresponding to the received data. In particular, we show that the covariance matrix exhibits a distinctive rank property under correct segmentation of the received symbols, reflecting a characteristic behavior in its eigenvalue spectrum that facilitates accurate estimation of the number of subcarriers.
The proposed method is more general than existing approaches in the literature, as it can detect an arbitrary number of subcarriers and its performance remains independent of the modulation scheme. 
The numerical results show that the proposed method accurately detects the number of subcarriers with high probability even at low SNR. 
\end{abstract}

\begin{IEEEkeywords}
Blind signal processing, Orthogonal Frequency-Division Multiplexing (OFDM), 
Parameter estimation, Cyclic prefix detection, Subcarrier number detection, 
Eigenvalue analysis, Spectrum sensing.
\end{IEEEkeywords}

\section{Introduction}
\label{sec:introduction}

%
Orthogonal Frequency-Division Multiplexing (OFDM) has become the cornerstone of modern broadband communication systems, including LTE, Wi-Fi, and 5G, due to its high spectral efficiency and robustness against multipath fading. By decomposing a high-rate serial stream into multiple parallel low-rate subcarriers, OFDM effectively transforms a frequency-selective channel into a set of flat-fading subchannels, simplifying the equalization and synchronization at the receiver end~\cite{Goldsmith_Book}. 
Further in OFDM systems, typically a cyclic prefix (CP) is inserted in each OFDM symbol that helps to mitigate inter-symbol interference (ISI) that the wireless fading channel may introduce.

%
In the \textit{cooperative} scenario, the receiver knows the OFDM system configuration, such as the number of subcarriers, subcarrier frequencies, and CP length. Note that when the OFDM parameters are known, the received noise-affected data can be demodulated to recover the transmitted data~\cite{Goldsmith_Book}.
In this work, we consider a \textit{non-cooperative} scenario where the OFDM system configuration is not completely known to the receiver. In this case, in order to recover the transmitted data, one needs to first estimate the underlying OFDM parameters only using the received data and thus this problem is termed as \textit{blind estimation of OFDM parameters}~\cite{OFDM_2006_Compare, OFDM_para_esti_Globecomm_2007, Guard_length_estimate, OFDM_Punchihewa_Trans_Wireless_2011, Freq_domain_ODFM_esti_2013_EURASIP, RADAR_OFDM_Globecom_2020, Testbed_OFDM_esti_2021, GFDM_para_esti_TCOM_2016}.
Such scenarios have applications in cognitive radios where the secondary receiver may not know the system configuration of the primary receiver~\cite{Congnitive_radios}. 
%


%
This problem of blind estimation of OFDM parameters has been studied in \cite{OFDM_2006_Compare, OFDM_para_esti_Globecomm_2007, Guard_length_estimate, OFDM_Punchihewa_Trans_Wireless_2011, Freq_domain_ODFM_esti_2013_EURASIP, RADAR_OFDM_Globecom_2020, Testbed_OFDM_esti_2021, GFDM_para_esti_TCOM_2016, CNN_AMC_OFDM_VTC_2020, AMC_Blind_OFDM_Access_2021, TOR_GAN_Signal_reconstruction, LSTM_Signal_reconstruction}.
While \cite{OFDM_2006_Compare, OFDM_para_esti_Globecomm_2007, Guard_length_estimate, OFDM_Punchihewa_Trans_Wireless_2011, Freq_domain_ODFM_esti_2013_EURASIP, RADAR_OFDM_Globecom_2020} explore the underlying structural properties of the received data
for parameter estimation, 
\cite{CNN_AMC_OFDM_VTC_2020, AMC_Blind_OFDM_Access_2021, TOR_GAN_Signal_reconstruction, LSTM_Signal_reconstruction} use machine learning-based methods to identify the OFDM parameters.
Earlier works \cite{OFDM_2006_Compare, OFDM_para_esti_Globecomm_2007, Guard_length_estimate} explore the correlation and cyclostationarity properties~\cite{Cyclostationarity_OFDM_1999} of the received data (either in time-domain or frequency domain) to estimate the OFDM parameters and they provide methods for the wireless channel without fading. 
Authors of \cite{OFDM_Punchihewa_Trans_Wireless_2011, Freq_domain_ODFM_esti_2013_EURASIP} extend the ideas based on the second-order cyclostationarity properties to provide algorithms for the wireless channel with fading.
In \cite{RADAR_OFDM_Globecom_2020}, authors consider the covariance matrix corresponding to the received data and study the eigenvalues of this matrix to estimate the OFDM parameters. 
%
Authors of \cite{Testbed_OFDM_esti_2021} provide
the testbed implementations of OFDM parameter estimation.

While a variety of algorithms are available in the literature, most methods suffer from one or more limitations: (i) they need the perfect knowledge of the channel condition, (ii) algorithms are not robust for the multipath or block fading channels, (iii) probability of correct parameter estimation degrades for the higher-order modulation schemes (particularly for works exploiting cyclostationarity), (iv) algorithms are designed for the number of OFDM subcarriers equal to a power of two.
Further, restricting subcarriers to a power of two may limit the applicability, as  the broader subcarriers are usually zero-padded in practical scenario.
We further observe that, while most of the existing works focus on providing algorithms for the OFDM parameter estimation, they lack in adequate analytical justification of the correctness of their algorithms.
%
%

%
%
In this work, we aim to remedy the lacunae of the existing works and provide a novel method for this parameter estimation problem for OFDM system. The key contributions and features of our method are summarized next:
\begin{itemize}
%
\item \textit{Novel method:} We consider the covariance matrix corresponding to the received data and study its eigenvalues to estimate of the number of subcarriers. Towards this, we use an approach based on the minimum description length (MDL)~\cite{MDL_Old_1989}, which has been widely used to detect the breakpoint separating signal and noise eigenvalues. 
%
\item \textit{Analytical characterization:} We study the rank properties of the covariance matrix of the noise-free data and prove that this matrix will be rank deficient only for the correct parameters (see \cref{thm:rank_property}). This  enabled us to correctly identify the number of subcarriers using MDL (see Algorithm \ref{alg:subcarrier_detection}).
%
\item \textit{Advantages over existing approaches:} Crucial advantages of our method are: (i) it works for arbitrary number of subcarriers (it need not be a power of two, see Fig. \ref{fig:Pd_N}), (ii) it works for multipath block fading channels, (iii) probability of correct parameter estimation is not sensitive to the choice of modulation scheme (see Fig. \ref{fig:Pd_SNR_QAM}).
\end{itemize}

\textit{Organization:} 
The OFDM system and received signal models are presented in Sec.~\ref{sec:System_model}. The proposed algorithm is detailed in Sec.~\ref{sec:subcarrier_detection}. Simulation-based performance analysis is discussed in Sec.~\ref{sec:results}, with conclusion in Sec.~\ref{Section_conclusion}.

\textit{Notation:} Vectors are denoted by bold lowercase letters (e.g., $\mathbf{v}$) and matrices by bold uppercase letters (e.g., $\mathbf{M}$). 
A vector $\mathbf{v}$ is represented as a column vector. The transpose, conjugate, and conjugate transpose are represented by $(\cdot)^T$, $(\cdot)^*$, and $(\cdot)^H$, respectively. The operator ${\rm vec}(\mathbf{M})$ denotes vectorization of matrix $\mathbf{M}$. The $(p,q)$-th element of matrix $\mathbf{M}$ and the $p$-th element of vector $\mathbf{v}$ are written as $\mathbf{M}(p,q)$ and $\mathbf{v}(p)$, respectively. $\mathbf{I}_K$ denotes the $K\times K$ identity matrix, and $\mathbf{Q}_N$ represents the $N\times N$ discrete Fourier transform (DFT) matrix with entries $\mathbf{Q}_N(p,q)=e^{-j2\pi pq/N}$. The 
rank and determinant of $\mathbf{M}$ are denoted by
$\mathrm{rk}(\mathbf{M})$ and $|\mathbf{M}|$ respectively.

\section{System Model}
\label{sec:System_model}
This paper focuses on estimating the number of subcarriers in an OFDM system from a passively observed signal. The following subsections present the design of the OFDM transmitter and the received signal model used for the estimation process.

\subsection{OFDM Transmitter}
\label{sec:ODFM_transmitter}
We consider an orthogonal frequency division multiplexing (OFDM) transmitter comprising $N$ orthogonal subcarriers separated by a uniform frequency spacing $\Delta f$, as shown in Fig.~\ref{fig:Block_Diagram}. Let $\bar{\mathbf{X}}_k\in\mathbb{C}^{N\times M}$ be the $k$-th  block consisting of streams of QAM symbols of length $M$ that parallely modulates $N$ subcarriers. 

\begin{figure}[H]
    \centering
    \hspace{.6cm}
    \includegraphics[width=0.5\textwidth]{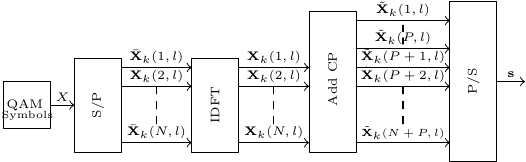}
\caption{Schematic Block Diagram of OFDM Transmitter.}
\label{fig:Block_Diagram}
\end{figure}

After inverse discrete Fourier transform (IDFT), the $k$-th time domain block can be written as
$$\mathbf{X}_k=\frac{1}{N}\mathbf{Q}_N^{-1}\bar{\mathbf{X}}_k,$$
where $\mathbf{Q}_N$ is $N\times N$ DFT matrix.
Next, to handle the inter-symbol interference (ISI) and aid demodulation with some useful properties, a  cyclic prefix (CP) of length $P$ is added to each OFDM symbol. Basically, CP appends the last $P$ rows of block $\mathbf{X}_k$ to its beginning, which yields $(N+P)\times M$ CP-OFDM block matrix as below
\begin{equation}
\tilde{\mathbf{X}}_k = 
\begin{bmatrix}
\mathbf{X}_k(N-P+1,1) & \dots & \mathbf{X}_k(N-P+1,M)\\
\vdots & \vdots & \vdots \\
\mathbf{X}_k(N,1) & \dots & \mathbf{X}_k(N,M)\\
\mathbf{X}_k(1,1) & \dots & \mathbf{X}_k(1,M)\\
\vdots & \vdots & \vdots \\
\mathbf{X}_k(N,1) & \dots & \mathbf{X}_k(N,M)
\end{bmatrix},\nonumber
\end{equation}
where $\mathbf{X}_k(i,j) $ represents the $(i,j)$-th entry of $\mathbf{X}_k$. 
Let $\tilde{\mathbf{x}}_{k,l}$ denotes the $l$-th column of $\tilde{\mathbf{X}}_k$ for $l = 1, 2, \ldots, M$. Thus, each column of $\tilde{\mathbf{X}}_k$ represents one CP-OFDM symbol of length $(N+P)$, and the overall block contains $M$ such symbols, giving a total serialized length $T = M(N+P)$. After inserting CP, the parallel streams of the CP-OFDM block are converted into a serial time-domain sequence for transmission. As parallel to serial conversion is performed at symbol by symbol basis, the transmission sequence $\mathbf{s}_k\in\mathbb{C}^{T\times 1}$ corresponding to $k$-th CP-OFDM block becomes
\begin{equation}
    \mathbf{s}_k = \mathrm{vec}(\tilde{\mathbf{X}}_k)=\begin{bmatrix}
        \tilde{\mathbf{x}}_{k,1}\\ \vdots \\\tilde{\mathbf{x}}_{k,M}
    \end{bmatrix},\label{eq:block_m_seq}
\end{equation}
where $\mathrm{vec}(\cdot)$ denotes the vectorization operator.

\subsection{Received Signal Model}
\label{sec:received_signal_model}
It is assumed that the receiver uses $K$ CP-OFDM blocks for estimating the number of subcarriers that are received via multipath channel while assuming the perfect synchronization with the transmitter. Let $\mathbf{s} =
[\mathbf{s}_1^T, \mathbf{s}_2^T,\cdots,\mathbf{s}_K^T]^T$ be the transmitted sequence of $K$ number of blocks. The multipath channel model is assumed to follow quasi block fading, wherein the $L$-tap channel response remains constant within each block and takes independent realizations across different blocks. The channel response under $k$-th block is denoted as  $\mathbf{h}_k = [h_k(0), h_k(1), \cdots, h_k(L-1)]^T$ such that $\mathbf{h}_k\sim\mathcal{CN}(\mathbf{0},\sigma_h^2\mathbf{I}_L)$. Due to the finite channel memory $(L{-}1)$, convolution of $\mathbf{s}$ with $\mathbf{h}_k$'s induces overlap between the adjacent blocks. Thus, in response to the transmitted sequence $\mathbf{s}$, the received sequence can be modeled as $\mathbf{r}=[\mathbf{r}_1^T, \mathbf{r}_2^T,\dots, \mathbf{r}_K^T]^T$ such that 
\begin{equation}
\mathbf{r}_k = \mathbf{H}_k\,\mathbf{s}_k + \mathbf{B}_k\,\mathbf{s}_{k-1} + \mathbf{w}_k,
\label{eq:block_model}
\end{equation}
where $\mathbf{r}_k \in \mathbb{C}^{T\times1}$, $\mathbf{w}_k \sim \mathcal{CN}(\mathbf{0}, \sigma_n^2 \mathbf{I}_{T})$, and for $k=1$, $\mathbf{H}_k =$

\begin{equation*}
\small
\begin{bmatrix}
h_k(0) & 0 & 0 &\cdots & \cdots &0 &0 \\
h_k(1) & h_k(0) & 0 &\cdots & \cdots & 0 & 0\\
h_k(2) & h_k(1) & h_k(0) &\cdots & \cdots & 0 & 0\\
\vdots & \ddots & \ddots & \ddots & \ddots& \vdots & \vdots \\
0&\dots&0&h_k(L-1) & \cdots & h_k(1) & h_k(0)
\end{bmatrix},
\label{eq:H_expanded}
\normalsize
\end{equation*}
is a $T\times T$ matrix that captures the intra-block interference, and 
\begin{equation*}
\small
\mathbf{B}_k =
\begin{bmatrix}
0 & 0 & \cdots & h_k(L{-}1) & \cdots& \cdots & h_k(1) \\
0 & 0 & \cdots & 0 & h_k(L{-}1) & \cdots&h_k(2) \\
\vdots & \ddots & \ddots & \vdots & \ddots & \vdots \\
0 & \cdots & 0 & 0 & \cdots &\cdots & h_k(L{-}1)\\
0 & \cdots & 0 & 0 & \cdots &\cdots & 0 \\
\vdots & \cdots & \vdots & \vdots & \vdots &\vdots & \vdots \\
0 & \cdots & 0 & 0 & \cdots &\cdots & 0
\end{bmatrix},
\label{eq:B_expanded}
\normalsize
\end{equation*}
is a  $T\times T$ matrix that captures the inter-block interference received from the $(k-1)$-th block. Thus, using \eqref{eq:block_model}, the received sequence can be modeled using lower bidiagonal channel matrix with Toeplitz blocks as 
\begin{equation}
\mathbf{r} =
\underbrace{
\begin{bmatrix}
\mathbf{H}_1 & \mathbf{0}   & \cdots & \mathbf{0} \\
\mathbf{B}_2 & \mathbf{H}_2 & \ddots & \vdots \\
\vdots       & \ddots       & \ddots & \mathbf{0} \\
\mathbf{0}   & \cdots       & \mathbf{B}_K & \mathbf{H}_K
\end{bmatrix}
}_{\triangleq\,\mathbf{H} }
\begin{bmatrix}
\mathbf{s}_1 \\[2pt]
\mathbf{s}_2 \\[2pt]
\vdots \\[2pt]
\mathbf{s}_K
\end{bmatrix}
+ \mathbf{w},
\label{eq:stacked_H}
\normalsize
\end{equation}
where $\mathbf{w}=[\mathbf{w}_1^T,\mathbf{w}_2^T,\dots,\mathbf{w}_K^T]^T$. In compact form, \eqref{eq:stacked_H} can be written as $\mathbf{r} = \mathbf{H}\mathbf{s} + \mathbf{w}$. Since each CP-OFDM symbol has length $(N+P)$, $\mathbf{H}_k$ can equivalently be viewed as a block-Toeplitz matrix comprising $M\times M$ sub-blocks, each of size $(N+P)\times(N+P)$.

\section{Estimation of Number of Subcarriers}
\label{sec:subcarrier_detection}
This section presents a method to estimate the number of subcarriers $N$ of CP-OFDM system assuming the CP length $P$ and number of channel taps $L$ (i.e. delay spread) are known. Note that CP length can be accurately estimated using a widely used cyclostationary  property of the autocorrelation function of the OFDM signal (for details, please refer to \cite{Cyclostationarity_OFDM_1999, OFDM_2006_Compare, OFDM_Punchihewa_Trans_Wireless_2011}). 
The proposed method is based on a key observation  that the CP introduces deterministic linear dependencies when the received signal is correctly segmented. This property allows to find the dimension of signal subspace only for a particular choice of segmentation, which in turn facilitates the estimation of $N$. 

The received sequence $\mathbf{r}$ of length $KT$ is first divided into $M^\prime=\left\lfloor\frac{KT}{N^\prime}\right\rfloor$ segments, each of length  $N^\prime$. Using these segments, the received sequence $\mathbf{r}$ is fragmented into a $N^\prime\times M^\prime$ matrix as below   
\begin{equation}
    \mathbf{R}_{\rm N^\prime} = [\tilde{\mathbf{r}}_1, \tilde{\mathbf{r}}_2, \dots, \tilde{\mathbf{r}}_{M^\prime}],\label{eq:R_N_prime}
\end{equation}
where $\tilde{\mathbf{r}}_k=[\mathbf{r}((k-1)N^\prime+1),\dots,\mathbf{r}(kN^\prime) ]^T$.



For the correct segmentation, we show that each column of $\mathbf{R}_{\rm N^\prime}$ contains exactly one full CP-OFDM symbol convolved with the channel response during the corresponding block. This allows us to relate the rank of $\mathbf{R}_{\rm N^\prime}$ (for the noise-free case) with the actual number of subcarriers $N$. This property holds only for $N^\prime=N^\star=N+P$; for other $N^\prime\neq N+P$, the matrix $\mathbf{R}_{\rm N^\prime}$ is a full rank matrix.
This property is formally stated in the following theorem.
\begin{theorem}
\label{thm:rank_property}
For the received noise-free sequence $\mathbf{r}$ with CP length $P\geq L$, the rank of $N^\prime\times M^\prime$ matrix  $\mathbf{R}_{\rm N^\prime}$, constructed with $M^\prime\geq N+P$, is
\begin{align}
\mathrm{rk} \big(\mathbf{R}_{\rm N^\prime}\big)
=\begin{cases}
    N + L - 1, & \text{if}~ N^{\prime}=N^\star=N+P,\\
    N^{\prime} & \text{otherwise}.
\end{cases}
\label{eq:rank_cases}
\end{align}
\end{theorem}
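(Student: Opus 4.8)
The plan is to analyze the column structure of $\mathbf{R}_{\rm N^\prime}$ in the two cases. Consider first the aligned case $N^\prime = N^\star = N+P$. Here each column $\tilde{\mathbf{r}}_k$ of $\mathbf{R}_{\rm N^\prime}$ consists of exactly one received CP-OFDM symbol of length $N+P$. From the block model \eqref{eq:block_model}, each such column is of the form (linear combination determined by a single block's channel taps $\mathbf{h}_k$) acting on one transmitted symbol $\tilde{\mathbf{x}}_{k,l}$, plus a contribution from the preceding symbol's tail. The first step is to write, for the $l$-th symbol inside block $k$, the received symbol as a circular-plus-linear convolution: because $P \ge L$, the CP absorbs the inter-symbol interference, and after the natural windowing the received symbol equals a circular convolution of the transmitted time-domain symbol $\mathbf{x}_{k,l}$ (length $N$) with $\mathbf{h}_k$, prepended by the cyclic-prefix copy of its own last $P$ samples. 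In the frequency domain this says the length-$N$ ``body'' of the received symbol is $\mathbf{Q}_N^{-1}\,\mathrm{diag}(\mathbf{Q}_N\mathbf{h}_k^{\text{(zero-padded)}})\,\bar{\mathbf{X}}_k(:,l)/N$, i.e. a diagonal (per-subcarrier) scaling of the data symbol. The key consequence is a \emph{deterministic linear dependency}: the first $P$ entries of the received symbol are an exact copy of the last $P$ entries, so each column of $\mathbf{R}_{\rm N^\prime}$ lies in a fixed $N$-dimensional subspace of $\mathbb{C}^{N+P}$ (the ``cyclic-extension'' subspace), giving $\mathrm{rk}(\mathbf{R}_{\rm N^\prime}) \le N$ from the prefix constraint alone — wait, one must be careful: the CP copy relation holds for the \emph{transmitted} symbol $\tilde{\mathbf{x}}_{k,l}$, and after convolution with $\mathbf{h}_k$ it becomes a relation among the first $P-(L-1)$ and a shifted block of entries. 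So the correct count is that the column space is constrained to dimension $N + (L-1)$: the convolution ``spreads'' the $N$ independent frequency-domain degrees of freedom into $N + L - 1$ time-domain coordinates that are not further constrained, while the remaining $P - (L-1)$ coordinates are forced copies. I would make this precise by exhibiting an explicit $P-(L-1)$-dimensional space of linear functionals annihilating every column.

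The second step is the lower bound $\mathrm{rk}(\mathbf{R}_{\rm N^\prime}) \ge N + L - 1$ in the aligned case. For this I would use that $M^\prime \ge N+P \ge N+L-1$ so there are enough columns, and that the QAM data symbols $\bar{\mathbf{X}}_k$ are generic (i.i.d.\ with full-support distribution) and the channel realizations $\mathbf{h}_k$ are drawn from a continuous distribution $\mathcal{CN}(\mathbf{0},\sigma_h^2\mathbf{I}_L)$, hence almost surely the diagonal frequency responses have all nonzero entries and the collection of received-symbol bodies spans the full $N$-dimensional body subspace; combined with the $L-1$ extra ``overhang'' coordinates from the convolution tail being excited, the columns span exactly the $N+L-1$-dimensional subspace identified in Step 1. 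Care is needed here about block boundaries — within one block $k$ all $M$ columns share the same $\mathbf{h}_k$, so one needs $K \ge 2$ distinct blocks (or enough columns within a block) to fill out the space; since $M^\prime \ge N+P$ and each block contributes $M$ columns, this is where one invokes that there are enough symbols in total.

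The third step handles the misaligned case $N^\prime \ne N+P$. The claim is that $\mathbf{R}_{\rm N^\prime}$ is full rank, i.e.\ $\mathrm{rk} = N^\prime$ (using $N^\prime \le M^\prime$). The idea is that when the segmentation length does not match $N+P$, the periodic CP copy structure is not aligned with the columns: the deterministic ``first $P$ entries = last $P$ entries'' relation now couples entries that sit in \emph{different} columns (and at positions that drift from column to column), so no fixed linear functional annihilates all columns. I would argue this by a genericity/dimension argument: treating the underlying data and channel coefficients as free parameters, the set of parameter values for which $\mathbf{R}_{\rm N^\prime}$ drops rank is a proper algebraic subvariety (the relevant $N^\prime\times N^\prime$ minors are polynomials not identically zero — to see non-vanishing, exhibit one parameter choice, e.g.\ a trivial channel $\mathbf{h}_k = \mathbf{e}_0$ reducing $\mathbf{R}_{\rm N^\prime}$ to a reshaping of the raw IDFT outputs, for which the relevant minor is a nonzero Vandermonde-type quantity), hence has measure zero.

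\textbf{Main obstacle.} I expect the genuinely delicate part to be Step 1/Step 2 in the aligned case — pinning down \emph{exactly} why the rank is $N+L-1$ and not $N$ or $N+P$. This requires carefully tracking how the length-$(N+P)$ transmitted symbol, with its built-in $P$-fold cyclic-copy redundancy, interacts with an $L$-tap convolution that also pulls in $L-1$ samples of the \emph{previous} symbol's tail, so that the net number of unconstrained time-domain coordinates per column is precisely $N + L - 1$. Equivalently, one must show the annihilator space of the columns has dimension exactly $P - (L-1) = N^\star - (N + L - 1)$, which means simultaneously proving an upper bound (enough independent linear dependencies survive the convolution) and a matching lower bound (no further dependencies, which is the genericity argument in $\mathbf{h}_k$ and $\bar{\mathbf{X}}_k$). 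The misaligned case, by contrast, should be a relatively routine ``generic matrices are full rank'' argument once the non-alignment of the copy structure is observed.
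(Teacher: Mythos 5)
Your proposal follows essentially the same route as the paper: the CP copy structure survives the $L$-tap convolution in samples $L,\dots,P$ of each correctly aligned column, yielding exactly $P-(L-1)$ deterministic row dependencies (your annihilating functionals are precisely the paper's duplicate rows), hence rank $N+L-1$ when $N'=N+P$, while misalignment drifts these relations across columns and the matrix is generically full rank. Your explicit genericity argument for the matching lower bound $\mathrm{rk}\ge N+L-1$ (and for the nonvanishing minor in the misaligned case) is in fact more careful than the paper, which simply counts the duplicate rows and asserts equality.
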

\begin{proof}
We first discuss the case of $N^\prime=N+P$. For this $N^\prime$, we have $M^\prime=\frac{KT}{N+P}=KM$. Recall from the input-output relation given in \eqref{eq:block_model},  $\mathbf{H}_k$ is the lower-triangular Toeplitz matrix and  $\mathbf{B}_k$ captures the spillover from block $\mathbf{s}_{k-1}$ into the head of block $\mathbf{s}_k$. 
Using \eqref{eq:block_model}, we can write the $i$-th element of vector $\tilde{\mathbf{r}}_l$ corresponding to the $l$-th CP-OFDM symbol of $k$-th block received under noise-free scenario as 
\begin{align*}
\tilde{\mathbf{r}}_l(i)&= \sum_{\ell=1}^{T}\mathbf{H}_k(i,\ell)\mathbf{s}_{k}(\ell) +\mathbf{B}_k(i,\ell)\mathbf{s}_{k-1}(\ell),
\end{align*}
where $l= (k-1)(N+P)+1,\dots,(k-1)(N+P)+M$. Further, using \eqref{eq:block_m_seq} and the structures of $\mathbf{H}_k$ and $\mathbf{B}_k$, we can write the $i$-th element of $\tilde{\mathbf{r}}_l$ as
\begin{align*}
\tilde{\mathbf{r}}_l(i)&= \sum_{\ell=0}^{\min(i,L)-1}\mathbf{h}_k(\ell)\tilde{\mathbf{x}}_{k,l}(i-\ell) + {\rm F},
\end{align*}
where the term ${\rm F}$ represents the IBI or ISI and is given by
{\small \begin{align*}
    {\rm F}=\begin{cases}
        0, &\text{if}~ i\geq L,\\        \sum\limits_{\ell=i}^{L-1}\mathbf{h}_k(\ell)\tilde{\mathbf{x}}_{k,l}(N+P+1-\ell), & \text{if}~ i<L, l>1, \\
        \sum\limits_{\ell=i}^{L-1}\mathbf{h}_k(\ell)\tilde{\mathbf{x}}_{k-1,M}(N+P+1-\ell), & \text{if}~ i<L, l=1.
    \end{cases}
\end{align*}}


Here, it is worth noting that the interference ${\rm F}$ from previous block or symbol is contributed to first  $L-1$ samples of vector $\tilde{\mathbf{r}}_l$. Further, for $i=L,\dots,P$, we can simplify    
\begin{align}
\tilde{\mathbf{r}}_l(i)&= \sum_{\ell=0}^{L-1}\mathbf{h}_k(\ell)\tilde{\mathbf{x}}_{k,l}(i-\ell),\nonumber\\
&\stackrel{(a)}{=}\sum_{\ell=0}^{L-1}\mathbf{h}_k(\ell){\mathbf{x}}_{k,l}(N+P-i-\ell),\nonumber\\
&=\tilde{\mathbf{r}}_l(N+P-i),\label{eq:row_similar}
\end{align}
where $\mathbf{x}_{k,l}$ is the $l$-th column of $\mathbf{X}_k$ and Step (a) follows using the structure of CP.
Now, using \eqref{eq:row_similar} and since the $i$-th row of $\mathbf{R}_{\rm N'}$ is $[\dots \tilde{\mathbf{r}}_l(i), \tilde{\mathbf{r}}_{l+1}(i)\dots]$, we can deduce that there are exactly $P-(L-1)$ duplicate rows in  $\mathbf{R}_{\rm N^\prime}$. This proves that the rank of $\mathbf{R}_{\rm N^\prime}$ is  equal to $N^\prime-(P-L+1)=N+L-1$ when $N^\prime=N+P$ and $N^\prime<M^\prime$. On the other hand, when $N^\prime\neq N+P$, the number of samples from previous block or symbol contributing interference to $\tilde{\mathbf{r}}_l$  varies as $l$ progresses. This is due to the incorrect segmentation will lead to spill over symbols unevenly across different columns of $\mathbf{R}_{\rm N^\prime}$. 
As a result, $\mathbf{R}_{\rm N^\prime}$ becomes full-rank.
\end{proof}
From Theorem \ref{thm:rank_property}, it can be observed that the selected $N^\prime$ can be classified as correct or incorrect based on the rank of segmented matrix $\mathbf{R}_{\rm N^\prime}$.We illustrate this rank-deficiency property using a small-scale example given in Appendix~\ref{app}.

It may be noted that the matrix $\mathbf{R}_{N^{\prime}}$ is no longer rank deficient when $\mathbf{r}$ received under noise. However, in this case, we use the eigenvalues of the corresponding covariance matrix to find the correct parameters. 
For the given $N'$, we determine the sample covariance matrix of $\mathbf{R}_{\rm N^\prime}$ as
\begin{equation}
\mathbf{C}_{\rm N^\prime} = \frac{1}{M'}\mathbf{R}_{\rm N^\prime}\mathbf{R}_{\rm N^\prime}^H.
\label{eq:Cr_def_ibi}
\end{equation}
Let $\lambda_{\rm N^\prime,1} \ge \cdots \ge \lambda_{\rm N^\prime,N^\prime}$ be the ordered eigenvalues of $\mathbf{C}_{\rm N^\prime}$ and let $N^\star=N+P$. Using Theorem \ref{thm:rank_property}, we have 
\begin{equation}
\mathrm{rk}(\mathbf{C}_{\rm N^\star})=N + L - 1, \label{eq:C_Nstar}
\end{equation}
for the noise-free case. Thus, for the noisy scenario, it is not difficult to see that the $P-(L-1)$ smallest eigenvalues will reduce to noise floor as the number of received OFDM symbols increases~\cite{MUSIC_1986}. This implies
\begin{equation}
    \lambda_{\rm N',\ell} \to \sigma_n^2 \text{~as~} M^*=\frac{KT}{N^\star}\to\infty,\label{eq:eigenvalue_sigma}
\end{equation}
for $\ell\in[N^\star+L-P-1:N^\star]$. However, for other $N^\prime\neq N^\star$,  $\lambda_{\rm N^\star,\ell} $ will converge to a value that is strictly above the noise floor with margin depending on the received SNR. 

From \eqref{eq:eigenvalue_sigma}, it is desirable to determine the optimal point of segregation as $p^*=N^\star+L-P-1$  when $N^\prime=N+P$ so that eigenvalue indices after and before $p^\star$ correspond to signal plus noise and only noise subspaces, respectively. This implies that $p^\star$ is the dimension of signal subspace. On the other hand, $p^\star$ should be equal to $N^\prime$ when $N^\prime\neq N+P$ as, in this case, all eigenvalues correspond to the signal plus noise subspace. For this purpose, minimum description length (MDL) criterion can be employed to accurately determine the dimension of signal subspace $p^\star$~\cite{MDL_Old_1989}. For given $\lambda_{\rm N^\prime,1}, \cdots, \lambda_{\rm N^\prime,N^\prime}$; the MDL criterion  evaluates   
\begin{align}
\mathrm{MDL}(\zeta;N^\prime) &= -(N^\prime-\zeta)M^\prime\log\left(\frac{\mathrm{GM}(\zeta;N')}{\mathrm{AM}(\zeta;N')}\right)\nonumber\\
&~~+\frac{1}{2}\zeta(2N^\prime-\zeta)\log M^\prime,\label{eq:MDL_def_ibi}
\end{align}

for $\zeta=1,\dots,N^\prime$ where 
\begin{align*}
\mathrm{GM}(\zeta;N^\prime) &=\Bigg(\prod_{i=\zeta+1}^{N^\prime} \lambda_{\rm N^\prime,i}\Bigg)^{\!\frac{1}{N^\prime-\zeta}},\\
\text{and~~}\mathrm{AM}(\zeta;N^\prime) & =\frac{1}{N'-\zeta}\sum_{i=\zeta+1}^{N^\prime} \lambda_{\rm N^\prime,i}.
\end{align*}
are the geometric  and arithmetic means of eigenvalues $\lambda_{\rm N^\prime,\zeta+1}, \cdots, \lambda_{\rm N^\prime,N^\prime}$. Recall, the AM--GM inequality, i.e. 
$\mathrm{GM}(\zeta;N') \le \mathrm{AM}(\zeta;N')$
with equality holding \emph{if and only if all residual eigenvalues are identical.}
This implies that $\mathrm{MDL}(\zeta;N^\star)$ attains its minimum at $\zeta=p^\star$  since $\mathrm{GM}(p^\star;N^\star)
\approx\mathrm{AM}(p^\star;N^\star)\approx\sigma_n^2$ (see \eqref{eq:eigenvalue_sigma}). 
Otherwise, for $N^\prime\neq N^\star$, $\mathrm{MDL}(\zeta;N^\prime)$ reaches its minimum  at $\zeta=N^\prime$, because $\mathrm{GM}(\zeta;N') < \mathrm{AM}(\zeta;N')$ for the monotonically decreasing eigenvalue sequence that does not exhibit convergence. To summarize this, we can write
\begin{align}
    \arg\min_\zeta\mathrm{MDL}(\zeta;N^\prime)=\begin{cases}
        p^\star &\text{if}~N^\prime =N^\star,\\
        N^\prime &\text{otherwise}.
    \end{cases}
\end{align}
The  minimum of the MDL curve appears distinctly only at
$N' = N^\star$, where the eigenvalue plateau of width $P-(L-1)$
reflects the cyclic-prefix redundancy preserved in the received symbol after interference spillover $L-1$ samples from the previous symbol/block. Using $p^\star=N^\star+L-P-1=N+L-1$, we can estimate the number of subcarriers as
\begin{equation}
    \hat{N}=  \arg\min_\zeta\mathrm{MDL}(\zeta;N^\star) - L+1.
\end{equation}
Since $p^\star=N^\prime + L-P-1=\arg\min_\zeta\mathrm{MDL}(\zeta;N^\prime)$ when $N^\prime=N^\star$ and $N^\star=N+P$, we can estimate $N$ as
\begin{equation}
     \hat{N}= \arg\min_{N^\prime} \Big|N^\prime +L-P-1 - \arg\min_\zeta\mathrm{MDL}(\zeta;N^\prime)\Big| - P.
 \end{equation}
Algorithm 1 presents the proposed approach for estimating the number of subcarriers $N$ using MDL criterion. 
\begin{algorithm}
\DontPrintSemicolon
\SetAlgoLined
\SetKwInOut{Input}{Input}
\SetKwInOut{Output}{Output}
\Input{ $\mathbf{r}$, $P$, $L$, $N_{\rm min}$, $N_{\rm max}$ }
\Output{$\hat{N}$.}
\BlankLine
\For{$N^\prime = N_{\rm min}: N_{\rm max}$}{
    Set $M^\prime = \lfloor KT / N^\prime \rfloor$\;
    \For{$k=1:M^\prime$}{
    $\tilde{\mathbf{r}}_k=[\mathbf{r}((k-1)N^\prime+1),\dots,\mathbf{r}(kN^\prime) ]^T$\;
    }
    \BlankLine
    Set $\mathbf{R}_{\rm N^\prime} = [\tilde{\mathbf{r}}_1, \tilde{\mathbf{r}}_2, \dots, \tilde{\mathbf{r}}_{M^\prime}]$\;
    Evaluate $\mathbf{C}_{\rm N^\prime} = \frac{1}{M^\prime} \mathbf{R}_{\rm N^\prime}\mathbf{R}_{\rm N^\prime}^{H}$\;
    Find eigenvalues of $\mathbf{C}_{\rm N^\prime}$ as $\lambda_{\rm N^\prime,1}, \cdots  \lambda_{\rm N^\prime,N^\prime}$\;
    Compute $\mathrm{MDL}(\zeta;N')$ using~\eqref{eq:MDL_def_ibi} for $\zeta = 1,\dots,N^\prime$\;
    $\hat{\zeta}(N') = \arg\min\limits_\zeta \mathrm{MDL}(\zeta;N')$\;
}
\BlankLine
$\hat{N} = \arg\min\limits_{N'} \big|\,\hat{\zeta}(N^\prime) - (N^\prime + L - 1 - P)\,\big|-P$\;
\Return $\hat{N} $\;

\caption{Estimating the number of subcarriers via eigen–spectrum MDL criterion}
\label{alg:subcarrier_detection}
\end{algorithm}

\section{Numerical Results and Discussion}
\label{sec:results}
This section presents the numerical analysis of  the proposed method for estimating the number of subcarriers of CP-OFDM. In particular, we numerically characterize the performance  in terms of detection probability (${\rm P_d}$) 
through Monte Carlo simulation with  1000 iterations, conducted over a wide range of system parameters to highlight the robustness of the proposed method. For numerical results, we consider number of subcarriers $N=64$, CP length $P=7$, number of delay taps $L=6$, number of blocks $K=5$, number of symbols per block $M=500$; unless mentioned otherwise. 

\begin{figure}[h]
         \centering\includegraphics[width=0.45\textwidth]{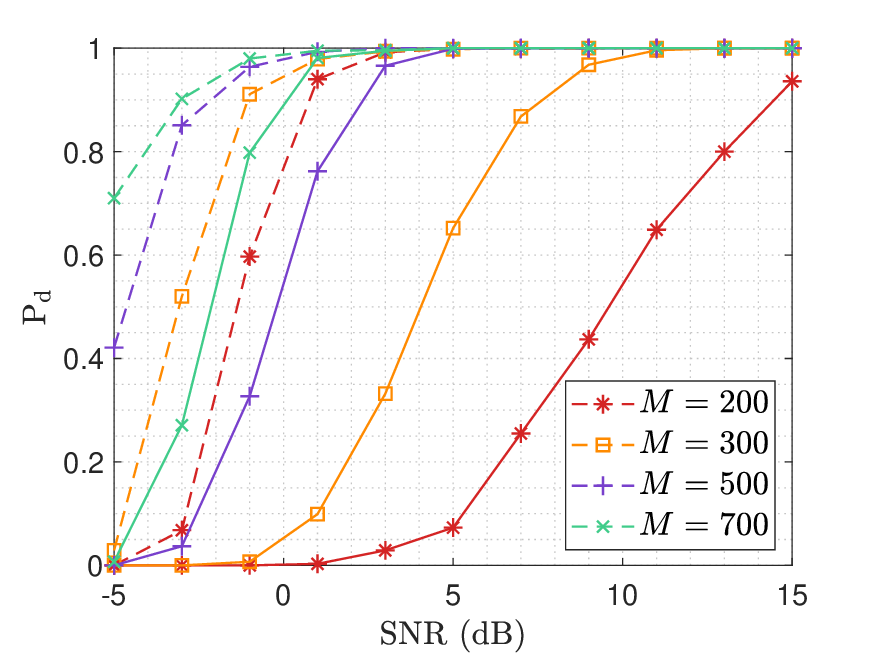}
        \caption{${\rm P_d}$ vs. SNR. The solid and dashed curves represent ${\rm P_d}$ for $N=64$ and $N=32$, respectively. }\vspace{-4mm}
        \label{fig:Pd_SNR}
\end{figure}
Fig. \ref{fig:Pd_SNR} shows that the  probability ${\rm P_d}$ of correct detection of the number of subcarriers $N$ for wide range of signal-to-noise ratio (SNR). It can be seen from the figure that the proposed method provides reasonably high ${\rm P_d}$ even at low SNR, particularly for smaller number of subcarriers $N$. Further, it can be observed that ${\rm P_d}$ improves with the increase in number of observation symbols $M$, however it degrades as the number of subcarriers $N$ increases. 

\begin{figure}[h]
         \centering\includegraphics[width=0.45\textwidth]{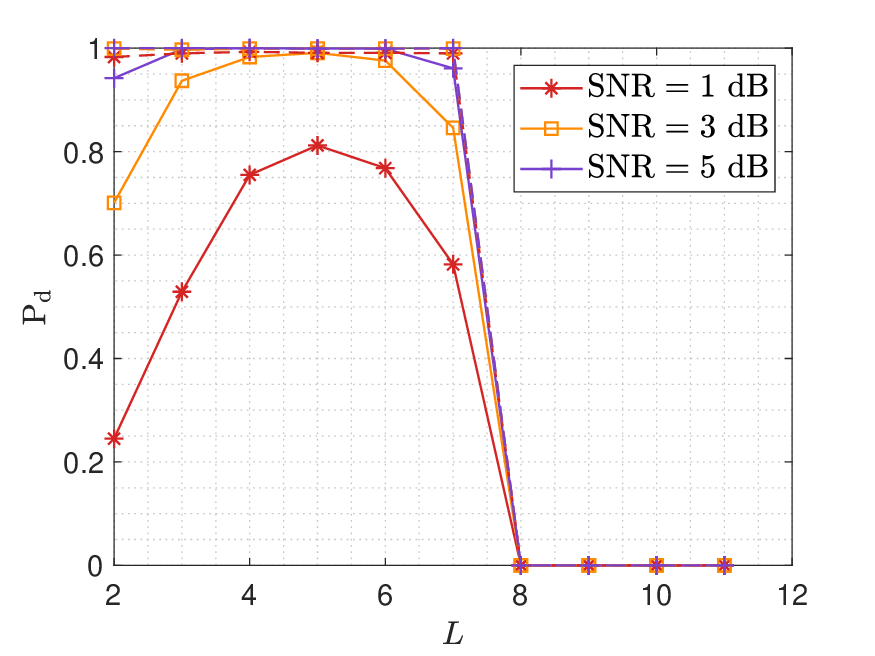}\vspace{-2mm}
        \caption{${\rm P_d}$ vs. $L$. The solid and dashed curves represent ${\rm P_d}$ for $N=64$ and $N=32$, respectively. }\vspace{-2mm}
        \label{fig:Pd_L}
\end{figure}
Fig. \ref{fig:Pd_L} shows the behavior of detection probability ${\rm P_d}$ as a function of the number of delay taps for CP $P=7$. It can be observed that  ${\rm P_d}$ is higher for $L\leq P$ and drops to zero for $L>P$. This is expected since the rank deficiency  property holds only for the case of $L\leq P$ (as shown in Theorem \ref{thm:rank_property}), and thus the proposed algorithm breaks down if this condition is not satisfied. However, it should be noted that the OFDM inherently implies that CP length should be greater than the delay spread to mitigate the inter-symbol interference, which ensures that condition required for the proposed method will always be satisfied. Besides, the figure also shows that the ${\rm P_d}$  peaks for a particular $L$. This may be due to the better estimation of eigen-spectrum breakpoints using MDL separating the signal and noise subspaces occurs at $L$ is smaller than $P$ but not close to zero.  
However, the overall performance becomes robust as the {\rm SNR} increase and/or $N$ decreases.

\begin{figure}[h]
         \centering\includegraphics[width=0.45\textwidth]{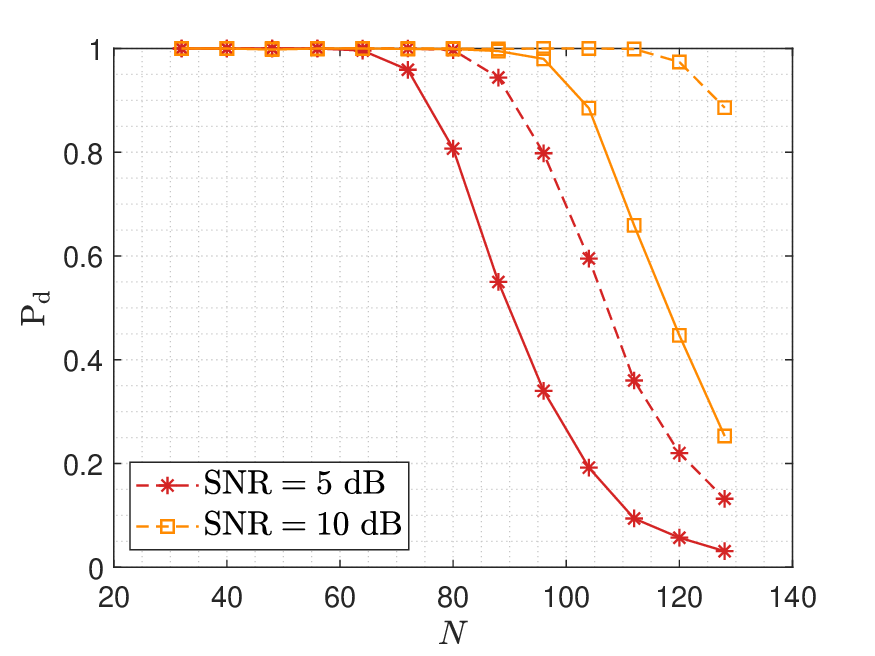}\vspace{-3mm}
        \caption{${\rm P_d}$ vs. $N$. The solid and dashed curves represent ${\rm P_d}$ for $P=7$ and $P=10$, respectively. }\vspace{-4mm}
        \label{fig:Pd_N}
\end{figure}
\begin{figure}[h]
         \centering\includegraphics[width=0.45\textwidth]{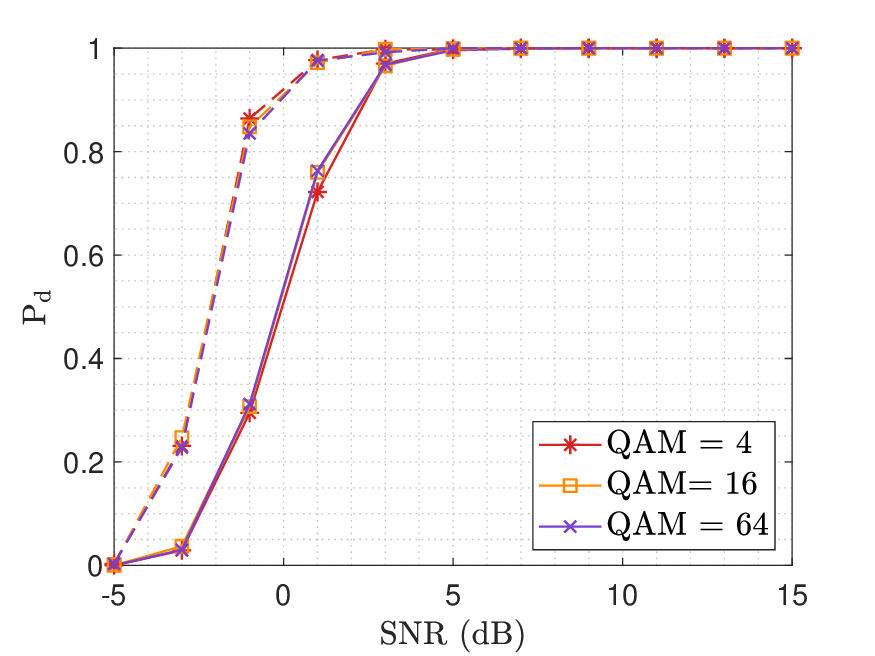}
        \caption{${\rm P_d}$ vs. {\rm SNR}. The solid and dashed curves represent ${\rm P_d}$ for $P=7$ and $P=10$, respectively. }\vspace{-2mm}
        \label{fig:Pd_SNR_QAM}
\end{figure}
\begin{figure*}[t!]
\setlength{\arraycolsep}{1pt}

\begin{equation}
\footnotesize
    \mathbf{R}_{4}=
\begin{bmatrix}
\mathbf{h}_1(1) \tilde{\mathbf{X}}_1(1,1) &
\mathbf{h}_1(1)\tilde{\mathbf{X}}_1(2,1)+\mathbf{h}_1(2)\tilde{\mathbf{X}}_1(1,2) &
\mathbf{h}_2(1)\tilde{\mathbf{X}}_2(1,1)+\mathbf{h}_2(2)\tilde{\mathbf{X}}_1(2,2) &
\mathbf{h}_2(1)\tilde{\mathbf{X}}_2(2,1)+\mathbf{h}_2(2)\tilde{\mathbf{X}}_2(1,2)\\

\cellcolor{gray!15}\mathbf{h}_1(1)\tilde{\mathbf{X}}_1(1,2)+\mathbf{h}_1(2)\tilde{\mathbf{X}}_1(1,1) &
\cellcolor{gray!15}\mathbf{h}_1(1)\tilde{\mathbf{X}}_1(2,2)+\mathbf{h}_1(2)\tilde{\mathbf{X}}_1(2,1) &
\cellcolor{gray!15}\mathbf{h}_2(1)\tilde{\mathbf{X}}_2(1,2)+\mathbf{h}_2(2)\tilde{\mathbf{X}}_2(1,1) &
\cellcolor{gray!15}\mathbf{h}_2(1)\tilde{\mathbf{X}}_2(2,2)+\mathbf{h}_2(2)\tilde{\mathbf{X}}_2(2,1)\\
\mathbf{h}_1(1)\tilde{\mathbf{X}}_1(1,1)+\mathbf{h}_1(2)\tilde{\mathbf{X}}_1(1,2) &
\mathbf{h}_1(1)\tilde{\mathbf{X}}_1(2,1)+\mathbf{h}_1(2)\tilde{\mathbf{X}}_1(2,2) &
\mathbf{h}_2(1)\tilde{\mathbf{X}}_2(1,1)+\mathbf{h}_2(2)\tilde{\mathbf{X}}_2(1,2) &
\mathbf{h}_2(1)\tilde{\mathbf{X}}_2(2,1)+\mathbf{h}_2(2)\tilde{\mathbf{X}}_2(2,2)\\
\cellcolor{gray!15}\mathbf{h}_1(1)\tilde{\mathbf{X}}_1(1,2)+\mathbf{h}_1(2)\tilde{\mathbf{X}}_1(1,1) &
\cellcolor{gray!15}\mathbf{h}_1(1)\tilde{\mathbf{X}}_1(2,2)+\mathbf{h}_1(2)\tilde{\mathbf{X}}_1(2,1) &
\cellcolor{gray!15}\mathbf{h}_2(1)\tilde{\mathbf{X}}_2(1,2)+\mathbf{h}_2(2)\tilde{\mathbf{X}}_2(1,1) &
\cellcolor{gray!15}\mathbf{h}_2(1)\tilde{\mathbf{X}}_2(2,2)+\mathbf{h}_2(2)\tilde{\mathbf{X}}_2(2,1)
\end{bmatrix}
\tag{14}
\label{eq:R4_example}
\end{equation}
\hrulefill
\end{figure*}
Fig. \ref{fig:Pd_N} shows that the ${\rm P_d}$ degrades with the increase in the number of subcarriers $N$, which is expected. However, it decreases slowly with increasing of  {\rm SNR} and/or $P$, ensuring that the proposed method is robust against $N$ at larger {\rm SNR}. The figure also shows that the proposed method works for arbitrary values of $N$.

Fig. \ref{fig:Pd_SNR_QAM} shows that the ${\rm P_d}$ performance is independent of the order of modulation scheme and improves rapidly  with increase in {\rm SNR} and the length of CP $P$. The independence of modulation order differentiates the proposed subspace based approach from the statistical properties based approaches available in the literature, whose performances are usually sensitive to the modulation schemes. From Fig. \ref{fig:Pd_N} and Fig. \ref{fig:Pd_SNR_QAM},  it can be observed that higher CP length $P$ provides better detection performance at low SNR and higher number of subcarriers $N$, especially when it is moderately higher than the number of delay taps $L$ as can be verified from Fig. \ref{fig:Pd_L}







\vspace{-2mm}
\section{Conclusion}
\label{Section_conclusion}

In this work, we studied the estimation problem  of estimating the number of subcarriers in an OFDM system under the multipath block fading scenario. For this problem, we proposed a novel method based on the eigen-spectral analysis of the covariance matrix corresponding to the received data. 
We also studied the rank properties of this covariance matrix, in particular we showed that the covariance matrix has  rank property for the correct segmentation of the received symbols  that is distinct from other segmentation  under quasi block fading. This fact facilitates the accurate estimation of the number of subcarriers.
The performance of our method was validated via various simulation results for wide range of parameters. The numerical results show that the proposed method offers high detection probability at low SNR. Further, unlike the existing methods in the literature,  the proposed  method works for an arbitrary number of subcarriers while ensuring performance robustness to the order of modulation scheme.


\vspace{-3mm}
\appendix
\section*{Example of Rank Deficiency of $\mathbf{R}_{N'}$}
\label{app}
Here, we provide an example to verify the rank-reduction property of $\mathbf{R}_{\rm N^\prime}$ as mentioned in  \cref{thm:rank_property}, 
Consider a small-scale setup with $N=2, P=2, L=2, M=2, \text{~and~} K=2$. Here, note that $P \ge L-1$, and $T = M(N+P) = 8$. For $L=2$, intra- and inter-block convolution matrices become
\begin{align*}
\mathbf{H}_k(i,j)&=\begin{cases}
    \mathbf{h}_k(1), &\text{if~} i=j,\\
    \mathbf{h}_k(2), &\text{if~} i-j=1,\\
    0, &\text{else},
\end{cases}\\
 \mathbf{B}_1(i,j)&=0 \text{~~for~} i,j, \text{~and}\\
\mathbf{B}_2(i,j)&=\begin{cases}
    \mathbf{h}_2(2), &\text{if~} i=1,j=12,\\
    0, &\text{else}.
\end{cases}
\end{align*}
Thus, for noise free case, received sequence blocks can be modeled using \eqref{eq:block_model} as
\begin{align*}
    \mathbf{r}_1 =\mathbf{H}_1\mathbf{s}_1=
&\begin{bmatrix}
\mathbf{h}_1(1) \tilde{\mathbf{X}}_1(1,1)\\
\mathbf{h}_1(1)\tilde{\mathbf{X}}_1(1,2)+\mathbf{h}_1(2)\tilde{\mathbf{X}}_1(1,1)\\
\mathbf{h}_1(1)\tilde{\mathbf{X}}_1(1,1)+\mathbf{h}_1(2)\tilde{\mathbf{X}}_1(1,2)\\
\mathbf{h}_1(1)\tilde{\mathbf{X}}_1(1,2)+\mathbf{h}_1(2)\tilde{\mathbf{X}}_1(1,1)\\
\mathbf{h}_1(1)\tilde{\mathbf{X}}_1(2,1)+\mathbf{h}_1(2)\tilde{\mathbf{X}}_1(1,2)\\
\mathbf{h}_1(1)\tilde{\mathbf{X}}_1(2,2)+\mathbf{h}_1(2)\tilde{\mathbf{X}}_1(2,1)\\
\mathbf{h}_1(1)\tilde{\mathbf{X}}_1(2,1)+\mathbf{h}_1(2)\tilde{\mathbf{X}}_1(2,2)\\
\mathbf{h}_1(1)\tilde{\mathbf{X}}_1(2,2)+\mathbf{h}_1(2)\tilde{\mathbf{X}}_1(2,1)
\end{bmatrix},\\
\mathbf{r}_2 =\mathbf{H}_2\mathbf{s}_2 + \mathbf{B}_2\mathbf{s}_1=
&\begin{bmatrix}
\mathbf{h}_2(1)\tilde{\mathbf{X}}_2(1,1)+\mathbf{h}_2(2)\tilde{\mathbf{X}}_1(2,2)\\
\mathbf{h}_2(1)\tilde{\mathbf{X}}_2(1,2)+\mathbf{h}_2(2)\tilde{\mathbf{X}}_2(1,1)\\
\mathbf{h}_2(1)\tilde{\mathbf{X}}_2(1,1)+\mathbf{h}_2(2)\tilde{\mathbf{X}}_2(1,2)\\
\mathbf{h}_2(1)\tilde{\mathbf{X}}_2(1,2)+\mathbf{h}_2(2)\tilde{\mathbf{X}}_2(1,1)\\
\mathbf{h}_2(1)\tilde{\mathbf{X}}_2(2,1)+\mathbf{h}_2(2)\tilde{\mathbf{X}}_2(1,2)\\
\mathbf{h}_2(1)\tilde{\mathbf{X}}_2(2,2)+\mathbf{h}_2(2)\tilde{\mathbf{X}}_2(2,1)\\
\mathbf{h}_2(1)\tilde{\mathbf{X}}_2(2,1)+\mathbf{h}_2(2)\tilde{\mathbf{X}}_2(2,2)\\
\mathbf{h}_2(1)\tilde{\mathbf{X}}_2(2,2)+\mathbf{h}_2(2)\tilde{\mathbf{X}}_2(2,1)
\end{bmatrix}.
\end{align*}
Concatenating $\mathbf{r}_1$ and $\mathbf{r}_2$  gives the  received sequence  of length $KT=16$ as $\mathbf{r}=[\mathbf{r}_1^T,\mathbf{r}_2^T]^T$. Now, segmenting $\mathbf{r}$ with $N^\prime=N+P=4$, gives a $4\times 4$ matrix as given in \eqref{eq:R4_example}. From \eqref{eq:R4_example}, it can be observed that the second and fourth rows (highlighted in gray) are identical, i.e.  $\mathbf{R}_4(2,j)=\mathbf{R}_4(4,j)$ for $j=1,\dots,4$. 
This deterministic linear dependence implies that 
\begin{equation*}
\mathrm{rk}(\mathbf{R}_4) = N+L- 1 = 3.    
\end{equation*}
Similarly, for other choices of $N^\prime\neq 4$, the matrix $\mathbf{R}_{\rm N^\prime}$ can be constructed to verify that $\mathrm{rk}(\mathbf{R}_N^\prime)=N^\prime$.

\vspace{-3mm}
\bibliographystyle{IEEEtran}


\end{document}